\documentclass[a4paper,11pt]{article}

\usepackage[english]{babel}
\usepackage{authblk,fullpage}
\usepackage{amsthm,amsmath,graphicx,url,pdfsync,enumerate,booktabs,xcolor,amssymb}        
\allowdisplaybreaks

\newtheorem{theorem}{Theorem}[section]
\newtheorem{definition}{Definition}[section]
\newtheorem{proposition}{Proposition}[section]
\newtheorem{remark}{Remark}[section]
\newtheorem{lemma}{Lemma}[section]

\title{\bf Optimizing edge weights in the inverse eigenvector centrality problem}

\author[1]{Mauro Passacantando\footnote{Corresponding author.}}

\author[2]{Fabio Raciti}

\affil[1]{\small University of Milano-Bicocca, Department of Business and Law, Via Bicocca degli Arcimboldi 8, 20126 Milan, Italy, \texttt{mauro.passacantando@unimib.it}}

\affil[2]{Department of Mathematics and Computer Science, University of Catania, Viale A. Doria 6, 95125 Catania, Italy, \texttt{fabio.raciti@unict.it}}

\date{}

\begin{document}

\maketitle

\noindent\textbf{Abstract:} 
In this paper we study the inverse eigenvector centrality problem on directed graphs: given a prescribed node centrality profile, we seek edge weights that realize it. Since this inverse problem generally admits infinitely many solutions, we explicitly characterize the feasible set of admissible weights and introduce six optimization problems defined over this set, each corresponding to a different weight-selection strategy. These formulations provide representative solutions of the inverse problem and enable a systematic comparison of how different strategies influence the structure of the resulting weighted networks.
We illustrate our framework using several real-world social network datasets, showing that different strategies produce  different weighted graph structures while preserving the prescribed centrality. The results highlight the flexibility of the proposed approach and its potential applications in network reconstruction, and network design or network manipulation.

\

\noindent\textbf{Keywords:} eigenvector centrality; weighted graph; inverse problem; Perron-Frobenius theorem


\section{Introduction}
\label{sec:intro}

The concept of centrality plays a fundamental role in network analysis, as it provides a quantitative measure of the relative importance of nodes in complex systems. 
In social networks, centrality is used to identify influential individuals, opinion leaders, and key actors in information diffusion; in engineering systems, it helps detect critical components affecting robustness and resilience; while, in biological networks it is commonly employed to uncover essential genes or proteins, often associated with vital functions or disease mechanisms.
Depending on the application context, different notions of centrality have been proposed. Some measures are purely topological in nature, such as degree, closeness, and betweenness, while others -- such as eigenvector centrality, Katz–Bonacich centrality \cite{Bon87}, and PageRank \cite{BriPag98} -- are based on recursive definitions or arise from underlying dynamical processes on the network. The latter class assigns importance to nodes through their interactions with other important nodes, thereby capturing global patterns of influence and connectivity. 
The seminal work of Freeman~\cite{Fre79} introduced a unified framework for centrality measures in social networks, while a comprehensive overview of these concepts and their applications is provided in the Newman’s book~\cite{New10}. 
Given the pervasive and growing role of networks across disciplines, the study of centrality measures remains an active area of research, also in connection with related fields such as game theory (see, e.g., \cite{PasRac26}).

We focus here on the eigenvector  centrality, which  assigns importance to nodes according to the importance of their neighbors, making it especially suitable for modeling influence, prestige, and propagation phenomena, and motivating growing interest in inverse and design problems based on such measures. 
In the seminal paper~\cite{Lat_etal12}, the authors showed that, for strongly connected directed networks, any prescribed centrality profile can be realized by assigning (in infinite possible ways) suitable weights to the edges. They also investigated the minimum controlling set problem, which consists in identifying the smallest set of nodes capable of inducing edge weights compatible with the prescribed centrality. 
The same research line has been pursued in~\cite{Cipolla25} with focus on Katz and PageRank centrality measures.

We thus  consider six optimization problems over the feasible set of weights which can be viewed as encoding different mechanisms for resolving the inherent non-uniqueness of the inverse centrality problem. When a regulator is present, each formulation may be interpreted as a distinct intervention strategy, whereby edge weights are selected so as to realize the prescribed centrality while optimizing a specific system-level criterion. In the absence of a central authority, the same formulations can alternatively be understood as modeling different endogenous mechanisms through which network interactions are shaped, reflecting implicit preferences or constraints that guide the emergence of particular weighted structures. This viewpoint provides a unifying motivation for the proposed optimization framework, allowing us to compare how different design principles or behavioral assumptions lead to qualitatively different realizations of the same target centrality.

The paper is organized as follows. 
In Section~\ref{sec:recall}, we introduce the notation and review some basic graph-theoretic notions, together with the definition of eigenvector centrality and  a brief overview of the Perron–Frobenius theorem. 
In Section~\ref{sec:inverse}, we present the inverse eigenvector centrality problem. 
In Section~\ref{sec:optim}, we describe our optimization problems and derive {\em a priori} bounds for their solutions. 
In Section~\ref{numexp}, we illustrate our models on a small graph and three real-world social network datasets. The concluding section summarizes our main findings and outlines directions for future research.


\section{Notation and preliminaries}
\label{sec:recall}

This section recalls the basic material needed for the rest of the paper. We first briefly review essential graph-theoretic concepts and matrix properties and  then introduce eigenvector centrality and state the Perron-Frobenius theorem for different classes of matrices.


\subsection{Recall on Graph Theory and Matrix Classes}
\label{sec:recal}

We consider here directed  graphs $G=(V,E)$, where $V=\{1,\dots,n\}$ is the set of nodes (or vertices) and $E \subset V \times V$, with $m=|E|$, is the set of arcs formed by ordered pairs of nodes $(i,j)$; $i$ is called the tail of the arc while $j$ is the head.
Moreover, we consider simple graphs, that is, there are no multiple arcs connecting the same pair of nodes, nor loops.
Two nodes $i$ and $j$ are said to be adjacent if they are connected by the arc $(i,j)$. 
The degree of node $i$ in a graph, {\em deg(i)}, is the number of arcs connected to it. 
In a directed graph, we distinguish between in-degree (number of incoming arcs) and out-degree (number of outgoing arcs).
The set of tails of the incoming arcs in a node $j$ is denoted by $BS(j) = \{ i \in N : (i,j) \in E \}$.
Information about the adjacency of nodes can be stored in the adjacency matrix $A$ whose elements $a_{ij}$ are equal to $1$ if $(i,j)$ is an edge, $0$ otherwise. 
Given a node $i$, the nodes connected to $i$ with an arc are called the {\em neighbors} of $i$.

An ordered sequence of nodes \((i_0, i_1, \ldots, i_r)\), with \(r \ge 1\), is called a \emph{directed walk} from \(i_0\) to \(i_r\) 
if \((i_h, i_{h+1}) \in {E}\) for \(h = 0, 1, \ldots, r-1\).
The node \(i_0\) is called the \emph{origin} of the directed walk, while \(i_r\) is its 
\emph{destination}.
The {\em length} of a walk is the number of its arcs. 
Let us remark that it is allowed to visit a node or go through an arc more than once.
A directed graph is {\em strongly connected} if for each pair of vertices $i$ and $j$ there is a directed walk from $i$ to $j$ and a directed walk from $j$ to $i$.
This relation is an equivalence relation on $V$. 
The equivalence classes
induced by it are called the \emph{strongly connected components} of $G$.
A strongly connected component is therefore a maximal subset
$C\subseteq V$ such that every pair of nodes in $C$ is mutually reachable.
When one strongly connected component contains a significantly larger number
of nodes than the others, it is commonly referred to as the
\emph{giant strongly connected component}.

A weighted  graph $(V,E,w)$ is a graph in which each arc  is assigned a numerical value called a weight.
These weights typically represent quantities such as distance, cost, time, capacity, or strength of connection. 
In the adjacency matrix of a weighted graph, denoted by $A_w$, the weights replace the ones. 
We use the notation $\mathbf{1}$ to denote the all-ones vector, with the dimension understood from the context.

The  connections between any two nodes in the graph are described by means of the powers of the adjacency matrix $A$, as specified in the following theorem.

\begin{theorem}\label{potenzematrice}
Let $A \in \mathbb{R}^{n\times n}$ be the adjacency matrix of a graph. 
Then $(A^{k})_{ij}$ represents the number of walks of length k from node i to node j.
\end{theorem}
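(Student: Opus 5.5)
We argue by induction on the walk length $k \ge 1$, using only the definition of matrix multiplication and the paper's definition of a directed walk: an ordered sequence $(i_0,\dots,i_k)$ with $(i_h,i_{h+1})\in E$ for every $h$, where nodes and arcs may be repeated. Throughout, $A$ is the $0$--$1$ adjacency matrix, so $a_{ij}=1$ exactly when $(i,j)\in E$.

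\emph{Base case $k=1$.} A walk of length $1$ from $i$ to $j$ is just the sequence $(i,j)$ with $(i,j)\in E$. There is exactly one such walk if $(i,j)\in E$ and none otherwise, and this count is $a_{ij}=(A^1)_{ij}$.

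\emph{Inductive step.} Assume that for some $k\ge 1$ and all $i,\ell$, the entry $(A^{k})_{i\ell}$ equals the number of walks of length $k$ from $i$ to $\ell$. Write
\[
(A^{k+1})_{ij}=\sum_{\ell=1}^{n}(A^{k})_{i\ell}\,a_{\ell j}.
\]
We need a bijection between walks of length $k+1$ from $i$ to $j$ and pairs $(W,\ell)$, where $W$ is a walk of length $k$ from $i$ to $\ell$ and $(\ell,j)\in E$. Given a walk $(i_0,\dots,i_{k+1})$, set $\ell=i_k$ and keep the prefix $W=(i_0,\dots,i_k)$. Conversely, given such a pair $(W,\ell)$, append $j$ to $W$. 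These two maps are mutually inverse.

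Now partition the walks of length $k+1$ from $i$ to $j$ according to the penultimate node $\ell$. For fixed $\ell$, the class has size $(A^k)_{i\ell}$ when $a_{\ell j}=1$, by the inductive hypothesis, and is empty when $a_{\ell j}=0$. In both cases the size is $(A^k)_{i\ell}\,a_{\ell j}$. Summing over $\ell$ gives $(A^{k+1})_{ij}$, which completes the induction.

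The only point needing care is that the count is not affected by repeated nodes or arcs. Walks are allowed to repeat both, so the bijection places no restriction on the prefix $W$, and the decomposition by $\ell$ is a genuine partition because each walk has exactly one penultimate node. The hypotheses that $G$ is simple and loopless are not needed. If loops were present, the diagonal entries $a_{ii}$ would count them as walks $(i,i)$ and the same argument would still hold.
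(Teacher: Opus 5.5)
Your proof is correct and complete. Inducting on $k$ and partitioning the walks of length $k+1$ from $i$ to $j$ by their penultimate node $\ell$, so that each class has size $(A^k)_{i\ell}\,a_{\ell j}$, is the standard argument for this fact. The paper gives no proof of this theorem; it recalls it as a well-known preliminary, so there is no route in the paper to compare yours against. Starting the induction at $k=1$ is the right choice, because the paper's definition of a walk requires length $r\ge 1$, and including $k=0$ with $A^0=I$ would need a convention for trivial walks.
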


\begin{definition}
A matrix $A\in \mathbb{R}^{n\times n}$ is called nonnegative (positive) if all its elements are nonnegative (positive).
A matrix $P \in \mathbb{R}^{n\times n}$ is called a permutation matrix if in each row and column exactly one element is equal to 1 and all others are equal to $0$.
\end{definition}
Multiplication by a permutation matrix produces a permutation of the rows or columns of the multiplied matrix.

\begin{definition}
A matrix $A\in \mathbb{R}^{n\times n}$ is reducible if there exists a permutation matrix $P \in \mathbb{R}^{n\times n}$ such that
\[
P^{\top}AP=\left[ 
\begin{array}{cc}
B & C \\ 
0_{n-r,r} & D
\end{array} 
\right], 
\qquad 1\leq r \leq n-1. 
\]
A matrix $A\in \mathbb{R}^{n\times n}$ is irreducible if it is not reducible.
\end{definition}

\begin{proposition}
(see~\cite[Theorem 6.2.24 (d)]{Horn})
\\
The adjacency matrix of a directed graph is irreducible if and only if the graph is strongly connected.
\label{irriducibile}
\end{proposition}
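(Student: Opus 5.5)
We prove both directions by contraposition, turning a block-triangular form of the adjacency matrix into a set of nodes with no outgoing arcs, and conversely. The only facts used are that conjugating by a permutation matrix merely relabels the nodes, so $(P^{\top}AP)_{ij}=a_{\sigma(i)\sigma(j)}$ for the permutation $\sigma$ encoded by $P$, and the definition of strong connectivity through directed walks.

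\emph{Strongly connected $\Rightarrow$ irreducible.} Suppose $A$ is reducible, so that $P^{\top}AP$ has the stated block form with a zero block $0_{n-r,r}$ in the lower left, where $1\le r\le n-1$. After relabeling by $\sigma$, let $S$ be the set of nodes corresponding to the last $n-r$ indices and $T$ the set corresponding to the first $r$ indices. Both sets are nonempty because $1\le r\le n-1$. The zero block says that $a_{ij}=0$ for every $i\in S$ and $j\in T$, so no arc leaves $S$ toward $T$. Now take any directed walk starting at a node of $S$. By induction on the walk's length, every node it visits stays in $S$: at each step the next node is the head of an arc whose tail lies in $S$, and that head cannot be in $T$. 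Hence no node of $T$ is reachable from $S$, and the graph is not strongly connected.

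\emph{Irreducible $\Rightarrow$ strongly connected.} Suppose the graph is not strongly connected, so there are nodes $u,v$ with no directed walk from $u$ to $v$. Define $S=\{u\}\cup\{x : \text{there is a directed walk from } u \text{ to } x\}$. Then $S$ is nonempty, and it is a proper subset of $V$ since $v\notin S$ (note $v\neq u$, because a node always reaches itself trivially in the sense of the definition of $S$). The set $S$ is closed under outgoing arcs: if $x\in S$ and $(x,y)\in E$, then appending the arc $(x,y)$ to a walk from $u$ to $x$ (or using the arc alone when $x=u$) gives a walk from $u$ to $y$, so $y\in S$. Choose a permutation that lists $V\setminus S$ first and $S$ last, and set $r=|V\setminus S|$, so $1\le r\le n-1$. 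In the permuted matrix the lower-left block collects the entries $a_{ij}$ with $i\in S$ and $j\notin S$. By closure all of these vanish, so $A$ is reducible.

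The only delicate point is the index bookkeeping. One must check that the lower-left zero block of $P^{\top}AP$ corresponds to arcs going \emph{from} $S$ \emph{to} $T$, and not the reverse, which depends on the convention that $a_{ij}=1$ encodes the arc $(i,j)$. One must also respect the convention that walks have length at least one, which is why $u$ is added to $S$ explicitly rather than assumed reachable from itself. Once the permutation is written down explicitly, both directions follow immediately.
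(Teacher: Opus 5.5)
Your proof is correct. The paper does not prove this proposition; it only quotes it from Horn--Johnson \cite[Theorem 6.2.24]{Horn}, so there is no in-paper argument to compare against. What you give is the standard self-contained proof: a reducible form corresponds to a nonempty proper set of nodes with no arcs leaving it, and conversely. Both directions check out, including the identity $(P^{\top}AP)_{ij}=a_{\sigma(i)\sigma(j)}$ and the orientation of the zero block, which encodes arcs from $S$ to $T$.

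One small repair concerns your justification of $v\neq u$ in the second direction, which gives the wrong reason. Under the paper's literal definition (walks of length at least one, ``for each pair of vertices $i$ and $j$''), strong connectivity could a priori fail only because $u=v$ has no closed walk through it. In that case $v\in S$ and your argument breaks. For $n\ge 2$ this cannot happen on its own. If every ordered pair of distinct nodes is joined by a walk, then each node $u$ lies on a closed walk $u\to v\to u$. So you may always choose the witnessing pair with $u\neq v$, and then $v\notin S$ follows.

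For $n=1$ the paper's definition makes the $1\times 1$ zero matrix irreducible, since no $r$ with $1\le r\le n-1$ exists. So the statement itself requires reading strong connectivity as a condition on distinct pairs only. That is the convention your parenthetical implicitly adopts, and it is worth stating explicitly.
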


For the subsequent development the following definition is fundamental.
\begin{definition}[Spectral radius]
\ \\
Let $A\in\mathbb{C}^{n\times n}$ and denote by $\sigma(A)$ the spectrum of $A$ (i.e., the set of its eigenvalues).
The \emph{spectral radius} of $A$ is defined as
\[
\rho(A) := \max\{ |\lambda| : \lambda \in \sigma(A) \}.
\]
\end{definition}

\subsection{Eigenvector Centrality and the Perron-Frobenius Theorem}

The idea of assigning importance scores via an eigenvector of a connection matrix has a long history.
The earliest known formulation appears in a paper by {Edmund Landau} (1895) on the ranking of players in chess tournaments \cite{Lan95},
where player strengths are obtained as the dominant eigenvector of the matrix of game outcomes.
The concept was later reinterpreted in social network analysis by Phillip Bonacich  in 1972 \cite{Bon72},
who introduced eigenvector centrality as a measure of status or influence in a network.
Bonacich’s formulation established the recursive principle that a node’s importance increases with the importance of its neighbors
that is,  a node is considered important if it is connected to other important nodes.
Let $G = (V,E)$ be a graph (or digraph) with adjacency matrix $A = (a_{ij}) \in \mathbb{R}^{n \times n}$.
The eigenvector centrality assigns to each vertex $i \in V$ a score $c_i$ proportional to the sum of the centralities of its neighbors:
\[
c_i = \frac{1}{\rho} \sum_{j=1}^{n} a_{ji} \, c_j,
\]
or, in compact form,
$A^{\top} c = \rho c$. This intuitive concept is formalized by the following definition.

\begin{definition}[Eigenvector Centrality]\label{ecdef}
\ \\
Let $G=(V,E)$ be a directed  graph with adjacency matrix
$A=(a_{ij})\in\mathbb{R}^{n\times n}$. Assume that $A$ is irreducible (equivalently,
that $G$ is strongly connected), and let $\rho>0$ denote the spectral radius of $A$.
The \emph{eigenvector centrality} of $G$ is the vector $c\in\mathbb{R}^n_{>0}$
defined as the unique (up to scaling) positive solution of
\begin{equation}
A^{\top}c=\rho c .
\end{equation}
The vector $c$ is normalized so that
$
\sum_{i=1}^n c_i = 1 .
$
The component $c_i$ is called the eigenvector centrality of vertex $i\in V$.
Existence, positivity, and uniqueness (up to normalization) of $c$ follow from
the Perron-Frobenius theorem. This definition is well posed because the Perron-Frobenius theorem also implies that there are no other positive eigenvectors apart from $c$.
\end{definition}

In the following, we state several versions of the Perron-Frobenius theorem. The version relevant to the eigenvector centrality problem is that for nonnegative irreducible matrices; the remaining ones are included for reference.
The presentation of this section is based on \cite[Chapter 8]{Horn}. The interest reader can also refer to \cite{BerPle94} and \cite{Min88}.

\begin{theorem}[Perron-Frobenius Theorem for positive matrices]
	\ \\
Let $A \in \mathbb{R}^{n\times n}$ be positive. Then, the following properties hold true:
\begin{enumerate}
\item $\rho(A)>0$;
\item  $\rho(A)$ is an algebraically simple eigenvalue of  $A$;
\item $\exists ! \ x \in \mathbb{R}^{n}$ such that $Ax= \rho(A)x$ and $ ||x||_{1}=1$; such vector is positive and is called Perron vector;
\item $\exists ! \ y \in \mathbb{R}^{n}$ such that $y^{\top} A= \rho(A)y^{\top}$ and $ x^{\top}y=1$; such vector is positive and is called left Perron vector;
\item $\rho (A)>|\lambda| $  for each eigenvalue  $\lambda$ of $A$ such that $\lambda \neq  \rho (A)$;
\item $\left( \dfrac{A}{\rho (A)}\right)^{m} \rightarrow  xy^{\top}$ for $m \to + \infty$.
\end{enumerate}
\end{theorem}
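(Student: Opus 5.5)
The plan is to prove the six items in order, using the variational characterization $\rho(A)=\max\{t\ge 0:\ \exists x\ge 0,\ x\neq 0,\ Ax\ge tx\}$ as the main tool. Positivity of $A$ turns every weak inequality into a strict one.

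\emph{Item 1.} Since $A$ is positive, $A\mathbf{1}\ge (\min_i \sum_j a_{ij})\mathbf{1}>0$. The standard row-sum bound $\rho(A)\ge \min_i\sum_j a_{ij}$ then gives $\rho(A)>0$. To see this bound directly, take an eigenvector $v$ of $\lambda$ and apply the triangle inequality to $|v|$ in the transpose version: $\rho(A)=\rho(A^\top)$ and $\|A^\top\|_\infty$-type estimates bound the spectral radius.

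\emph{Existence of a positive eigenvector.} Take an eigenvalue $\lambda$ with $|\lambda|=\rho$ and eigenvector $v$, and set $x=|v|$. The triangle inequality gives $Ax\ge |Av|=\rho x$. Let $z=Ax-\rho x\ge 0$. If $z\neq 0$, then $Az>0$ and $Ax>0$. Hence $A(Ax)>\rho\,Ax$ componentwise, so $A(Ax)\ge(\rho+\varepsilon)Ax$ for some $\varepsilon>0$. Iterating gives $A^k(Ax)\ge(\rho+\varepsilon)^kAx$, which forces $\rho(A)\ge\rho+\varepsilon$ via $\|A^k\|^{1/k}\to\rho$, a contradiction. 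Therefore $Ax=\rho x$, and $x=\rho^{-1}Ax>0$. Normalizing $\|x\|_1=1$ yields the Perron vector. Applying the same argument to $A^\top$ yields a positive left vector, which can then be scaled so that $x^\top y=1$.

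\emph{Items 3 and 5.} For geometric simplicity, suppose $w$ is a real eigenvector of $\rho$ not proportional to $x$. Choose $t$ so that $x-tw\ge0$ with some zero entry. Applying $A$ gives $\rho(x-tw)=A(x-tw)>0$ unless $x-tw=0$, a contradiction. Complex eigenvectors are handled by taking real and imaginary parts, which gives uniqueness in item~3; the same argument for $A^\top$ gives item~4.

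For item~5, equality in $|Av|=A|v|$ for a positive matrix forces all entries $v_j$ to share a common argument. Hence $v=e^{i\theta}|v|$, and $\lambda=\rho$.

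\emph{Item 2, algebraic simplicity.} Suppose there were a Jordan chain $(A-\rho I)u=x$. Multiplying on the left by $y^\top$ gives $0=y^\top x>0$, which is impossible.

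\emph{Item 6.} Write $L=xy^\top$. Then $AL=LA=\rho L$ and $L^2=L$, so $(A-\rho L)^m=A^m-\rho^mL$. The eigenvalues of $A-\rho L$ are those of $A$ with $\rho$ replaced by $0$; this uses the simplicity from item~2 and the left/right eigenvector decomposition. By item~5, $\rho(A-\rho L)<\rho$, so $(A/\rho)^m-L=((A-\rho L)/\rho)^m\to0$.

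The main obstacle is expected to be the strictness step in item~5. The equality case of the triangle inequality has to be handled carefully, using that every row of $A$ has all entries positive, so that it propagates to a common phase across all coordinates.
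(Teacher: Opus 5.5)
The paper gives no proof of this theorem. It quotes it from Horn and Johnson~\cite{Horn}, Chapter~8, and your argument is essentially the standard proof found there:
\begin{itemize}
\item the bound ``$Au\ge tu$ with $u\ge 0$, $u\neq 0$ implies $\rho(A)\ge t$'' (via Gelfand's formula), used to show $A|v|=\rho|v|$ and $|v|>0$;
\item the equality case of the triangle inequality for item~5;
\item the ``subtract a multiple until a zero entry appears'' argument for geometric simplicity;
\item $y^\top x>0$ to exclude a Jordan chain;
\item $L=xy^\top$ with $(A-\rho L)^m=A^m-\rho^m L$ for item~6, where algebraic simplicity is indeed what keeps $\rho$ out of the spectrum of $A-\rho L$.
\end{itemize}
The structure and the ordering of the items are sound.

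Two corrections are needed.

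First, the ``direct'' justification you sketch for item~1 does not work. The triangle inequality applied to an eigenvector gives $|\lambda|\,|v|\le A|v|$, and $\rho(A)=\rho(A^\top)\le\|A^\top\|_\infty$ is an \emph{upper} bound. Neither yields the lower bound $\rho(A)\ge\min_i\sum_j a_{ij}$. This matters because your existence step divides by $\rho$. The fix is to apply your own bound with $u=\mathbf{1}$: setting $t=\min_i\sum_j a_{ij}>0$ gives $A^k\mathbf{1}\ge t^k\mathbf{1}$, hence $\|A^k\|_\infty\ge t^k$ and $\rho(A)\ge t>0$. Alternatively, $\mathrm{tr}\,A>0$ is the sum of the eigenvalues, so they cannot all vanish.

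Second, with the normalization $\|x\|_1=1$, uniqueness in item~3 holds only up to sign, since $-x$ also qualifies. Your one-dimensionality argument proves exactly this. Literal uniqueness requires the normalization $\mathbf{1}^\top x=1$ (as in Horn--Johnson) or the restriction $x\ge 0$.
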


\begin{theorem}[Perron-Frobenius Theorem for nonnegative matrices]
	\ \\
Let $A \in~\mathbb{R}^{n\times n}$ be nonnegative. Then, the following properties hold true:
\begin{enumerate}
\item $\rho(A)>0$;
\item $\exists \, x \in \mathbb{R}^{n}\backslash{\left\lbrace 0 \right\rbrace}$ such that $Ax= \rho(A)x$ and $x \geq 0$.
\end{enumerate}
\end{theorem}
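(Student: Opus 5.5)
The plan is to derive the nonnegative case from the Perron-Frobenius Theorem for positive matrices stated above, by perturbing $A$ into positive matrices and passing to a limit. Before that, a caveat on item~1. As literally stated, $\rho(A)>0$ cannot hold for every nonnegative matrix. For instance, the zero matrix, or the nilpotent matrix $\left[\begin{array}{cc}0&1\\0&0\end{array}\right]$, is nonnegative with $\rho(A)=0$. What one can prove in general is $\rho(A)\ge 0$, which is immediate from the definition of the spectral radius. Strict positivity needs an extra hypothesis, such as irreducibility, which is the setting of Definition~\ref{ecdef}. So I would prove item~1 in the corrected form $\rho(A)\ge 0$, together with item~2 exactly as stated.

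For item~2, let $J$ be the all-ones matrix and set $A_k := A + \tfrac1k J$ for $k\in\mathbb{N}$. Each $A_k$ is positive, so the theorem for positive matrices gives a Perron vector $x_k>0$ with $\|x_k\|_1=1$ and $A_k x_k=\rho(A_k)x_k$. The vectors $x_k$ lie in the compact set $\{x\ge 0:\ \|x\|_1=1\}$, so a subsequence converges to some $x$ in that set. In particular $x\ge 0$ and $x\neq 0$.

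Next I would show that $\rho(A_k)\to\rho(A)$. The eigenvalues of a matrix depend continuously on its entries, because they are the roots of the characteristic polynomial, whose coefficients are polynomial in the entries. Hence $\rho$ is continuous, and $A_k\to A$ gives $\rho(A_k)\to\rho(A)$. Passing to the limit along the subsequence in $A_k x_k=\rho(A_k)x_k$ then yields $Ax=\rho(A)x$, which proves item~2.

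The main obstacle is the continuity of $\rho$. I would justify it through the continuity of polynomial roots with respect to the coefficients; a Rouché-type argument or a standard reference such as \cite{Horn} would do. One could avoid this step by using monotonicity instead. For $0\le A\le B$ entrywise one has $\rho(A)\le\rho(B)$, which follows from $|A^m|\le B^m$ together with Gelfand's formula. This shows that $\rho(A_k)$ is nonincreasing in $k$ and bounded below by $\rho(A)$, so it has a limit $\mu\ge\rho(A)$. The limit equation $Ax=\mu x$ with $x\neq 0$ makes $\mu$ an eigenvalue of $A$, so $\mu\le\rho(A)$. Therefore $\mu=\rho(A)$, and this route gives the result without appealing to continuity of eigenvalues.
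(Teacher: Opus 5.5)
Your proof is correct, and so is your caveat on item~1: it fails for general nonnegative $A$ (the zero matrix or any nilpotent nonnegative matrix has $\rho(A)=0$), so only $\rho(A)\ge 0$ holds, and strict positivity needs an extra hypothesis such as irreducibility with $n\ge 2$. The paper gives no proof of its own and defers to \cite[Chapter 8]{Horn}, where this result is proved by exactly your second route: perturb to $A+\varepsilon J$, take a limit of normalized Perron vectors, and identify the limiting eigenvalue with $\rho(A)$ via monotonicity of $\rho$ on nonnegative matrices (apply Gelfand's formula with an entrywise-monotone norm such as the Frobenius norm); your continuity-of-eigenvalues route is an equally valid alternative.
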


\begin{theorem}[Perron-Frobenius Theorem for nonnegative and irreducible matrices]
	\ \\
Let $A \in \mathbb{R}^{n\times n}$  be nonnegative and irreducible. 
Then, the following properties hold true:
\begin{enumerate}
\item $\rho(A)>0$;
\item  $\rho(A)$ is an algebraically simple eigenvalue of  $A$;
\item $\exists ! \ x \in \mathbb{R}^{n}_+$ such that $Ax= \rho(A)x$ and $ ||x||_{1}=1$; such an eigenvector is positive and is called  Perron vector;
\item $\exists ! \ y \in \mathbb{R}^{n}$ such that $y^{\top}A= \rho(A)y^{\top}$ and $ x^{\top}y=1$; such an eigenvector is positive and is called a left Perron vector.
\end{enumerate}
\end{theorem}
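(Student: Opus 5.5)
The plan is to reduce to the positive-matrix version of the Perron–Frobenius theorem stated above, using the fact that an irreducible nonnegative matrix becomes positive after a suitable polynomial transformation. By Proposition~\ref{irriducibile}, irreducibility of $A$ means the associated digraph (arc $(i,j)$ whenever $a_{ij}>0$) is strongly connected. So for any $i\neq j$ there is a directed walk from $i$ to $j$ of length at most $n-1$. Combined with the identity matrix, the walk-counting interpretation of Theorem~\ref{potenzematrice}, adapted to the sign pattern of $A$, gives that $B:=(I+A)^{n-1}$ is a positive matrix. The key observation is that $B$ is a polynomial in $A$. Hence every eigenvector of $A$ with eigenvalue $\lambda$ is an eigenvector of $B$ with eigenvalue $(1+\lambda)^{n-1}$, and $B$ and $A$ commute.

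\textbf{Existence and positivity.} By the nonnegative version of the theorem there is $x\ge 0$, $x\neq 0$, with $Ax=\rho(A)x$. Then $Bx=(1+\rho(A))^{n-1}x$, and since $B>0$ and $x\ge0$ is nonzero, $Bx>0$, so $x>0$. This $x$ can be normalized so that $\|x\|_1=1$.

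\textbf{$\rho(A)>0$.} We have $Ax=\rho(A)x$ with $x>0$. Irreducibility with $n\ge2$ forces every row of $A$ to have a positive entry, so $Ax$ has all entries positive and therefore $\rho(A)>0$.

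\textbf{Uniqueness and simplicity.} If $z\ge0$ is any nonnegative eigenvector of $A$ with eigenvalue $\mu$, the same argument shows $z>0$. Pairing with a positive left eigenvector $y$ (obtained by applying the above to $A^\top$, which is also irreducible) gives $\mu\, y^\top z=\rho(A)\, y^\top z$, so $\mu=\rho(A)$. Since $(1+\lambda)^{n-1}$ has modulus at most $(1+\rho(A))^{n-1}$, with equality only for $\lambda=\rho(A)$ when $|\lambda|\le\rho(A)$, we get $\rho(B)=(1+\rho(A))^{n-1}$. The positive-matrix theorem applied to $B$ says that $\rho(B)$ is algebraically simple with a one-dimensional eigenspace. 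The algebraic multiplicity of $\rho(A)$ for $A$ is at most the multiplicity of $(1+\rho(A))^{n-1}$ for $B$, which is $1$. Hence $\rho(A)$ is algebraically simple, and $x$ is unique under the normalization.

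\textbf{Left vector.} Applying the same reasoning to $A^\top$ yields a positive $y$ with $y^\top A=\rho(A)y^\top$, unique up to scaling. Since $x^\top y>0$, we can rescale so that $x^\top y=1$.

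The main obstacle is the positivity of $(I+A)^{n-1}$ together with the multiplicity transfer from $B$ back to $A$. The first requires writing $(I+A)^{n-1}=\sum_k\binom{n-1}{k}A^k$ and checking that every off-diagonal entry is reached by some $A^k$ with $k\le n-1$, while the diagonal is covered by $I$. For the second, I would use the spectral mapping theorem for polynomials on the Jordan form.
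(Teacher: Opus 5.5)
Your argument is correct and is essentially the standard proof in Horn--Johnson, Chapter 8, which the paper cites instead of proving the theorem: show $(I+A)^{n-1}>0$, reduce to the positive-matrix theorem, and use spectral mapping to transfer algebraic simplicity back to $\rho(A)$. One caveat: item 1 genuinely requires $n\ge 2$, which is Horn--Johnson's hypothesis, because under the paper's definition the $1\times 1$ zero matrix is irreducible with $\rho=0$; your proof tacitly assumes $n\ge 2$ (which is the right fix), and you rightly rely only on item 2 of the paper's nonnegative version, since its item 1 fails for nilpotent matrices.
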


\begin{remark}
For directed graphs, eigenvector centrality is meaningful only for strongly connected networks, since otherwise the dominant eigenvector need not be unique or strictly positive. Strong connectivity guarantees that all nodes participate in the recursive definition of importance and ensures a well-defined centrality measure. When the graph is not strongly connected, one may alternatively restrict attention to its giant strongly connected component, when such a component exists.
\end{remark}


\section{Inverse Eigenvector Centrality Problem}
\label{sec:inverse}

We introduce the inverse eigenvector centrality problem with a simple example.
Let us consider the following directed and strongly connected graph. 
On the left is the graphical representation, on the right its adjacency matrix $A$.

\begin{figure}[htb]
\begin{minipage}{0.4\textwidth}
\begin{center}
	\includegraphics[width=0.6\textwidth]{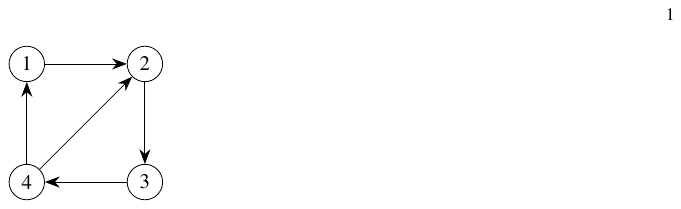}
\end{center}
\end{minipage}%
\begin{minipage}{0.6\textwidth}
{\large    \[
    A = \begin{pmatrix}
    0 & 1 & 0 & 0 \\
    0 & 0 & 1 & 0 \\
    0 & 0 & 0 & 1 \\
    1 & 1 & 0 & 0
    \end{pmatrix} 
    \]
}
\end{minipage}
\caption{Directed graph and corresponding adjacency matrix.}
\label{fig:example_graph}
\end{figure}

To find the eigenvector centrality, we solve $A^\top c = \rho c$.
Numerical calculation provides the eigenvalue $\rho_0 = 1.2207$ and the centrality eigenvector
\[ 
c_0 = (0.1808, \ 0.3290, \ 0.2695, \ 0.2207).
 \]
Consider now the weighted adjacency matrix $A_w$, assign arbitrarily $c$, (with strictly positive components) and consider the equation $A_w^\top c = \rho c$, where $\rho$ is a positive parameter, and the arc weights $w$ are the unknowns. Let us observe that in order to keep the topology of the network unchanged, we only consider strictly positive weights. The system  reads:
$$
\begin{cases}
c_4 w_{41} = \rho c_1 \\
c_1 w_{12} + c_4 w_{42} = \rho c_2 \\
c_2 w_{23} = \rho c_3 \\
c_3 w_{34} = \rho c_4
\end{cases}
$$
where the entries of $w$ are ordered by scanning the rows of $A$ and listing the indices corresponding to its nonzero entries. The above system has thus the form $Bw=\rho c$, where $B$ is given by
\[
B = \begin{pmatrix}
0 & 0 & 0 & c_4 & 0 \\
c_1 & 0 & 0 & 0 & c_4 \\
0 & c_2 & 0 & 0 & 0 \\
0 & 0 & c_3 & 0 & 0
\end{pmatrix}.
\]
The columns of $B$ correspond to the arcs, and each column contains only one non-zero element, located at the position of the arc's destination (head) and with a value equal to the centrality of the arc's source (tail).
 $B$ has rank 4 (since $c_i > 0$) and there are 4 columns with non-zero elements positioned in different rows, therefore the system admits $\infty^1$ solutions regardless of how the parameter $\rho > 0$ is fixed.
The solution  of the system $Bw=\rho c$ is
\[
\left( w_{12}, \ \rho \frac{c_3}{c_2}, \ \rho \frac{c_4}{c_3}, \ \rho \frac{c_1}{c_4}, \ \frac{\rho c_4 - c_1 w_{12}}{c_4} \right),
\]
which gives positive vectors  for every $w_{12}$ such that $0 < w_{12} < \rho c_4/c_1$. 

\

We now generalize the construction of the above example. 
Given a weighted graph with weighted adjacency matrix $A_w$, a fixed  centrality vector $c$ and a parameter $\rho >0$, the equation
\begin{equation}\label{eigw}
A^{\top}_w c=\rho c
\end{equation}
can be re-written as
\begin{equation}
B w= \rho c,
\end{equation}
where
\[
B_{ik} = \begin{cases} c_j & \text{if  arc } \alpha_k = (j, i) \\ 0 & \text{otherwise} \end{cases} \quad \implies \quad (Bw)_i = \sum_{j \in BS(i)} c_j w_{ji}
\]
or also, for column vectors:
\[
B_{\cdot, k} = c_j e_i, \quad \text{if } \alpha_k = (j, i)
\]
where $e_i$ is the $i-th$ canonical vector of $\mathbb{R}^n$.
Let us verify that this rule is correct by calculating the $i$-th component of the vector $Bw$:
\[
(Bw)_i 
= 
\sum_{k=1}^{m} B_{ik} w_k 
= 
\sum_{k: \alpha_k = (j, i)} c_j w_k 
= 
\sum_{j \in BS(i)} c_j w_{ji}
=
(A_w^\top c)_i .
\]
We now prove the existence of positive solutions of the system $Bw = \rho c$,
Let us consider the $j$-th row of the system $B w = \rho c$,
which we can rewrite as
\begin{align}\label{e:Bw=rhoc}
 \sum_{i \in BS(j)} c_i w_{ij} = \rho c_j, 
 \qquad \forall j \in N .
 \end{align}
Since every node has at least one incoming arc, we have $|BS(j)| \ge 1$.
For each node $j$, choose an incoming arc $(i^*, j)$, fix $\varepsilon > 0$ and set
\[
w_{ij} = \varepsilon, \qquad \forall\ i \in BS(j) \setminus \{ i^*\}.
\]
 Now define the weight of the arc $(i^*, j)$ by imposing
\[
c_{i^*} w_{i^* j} + \sum_{i \in BS(j) \setminus \{i^*\}} c_i \varepsilon = \rho c_j
\]
from which we derive
\[
w_{i^* j} 
= 
\frac{1}{c_{i^*}} \left[ \rho c_j - \varepsilon \sum_{i \in BS(j) \setminus \{i^*\}} c_i \right].
\]
Therefore, having fixed $\rho > 0$ and the vector $c > 0$, we have that $w_{i^* j} > 0$ if
\[
0 < \varepsilon < \frac{\rho c_j}{\displaystyle \sum_{i \in BS(j) \setminus \{i^*\}} c_i}.
\]
This condition must hold for all nodes $j$; therefore, to have $w_{i^* j} > 0$ for any $j$, one must have:
\[
0 < \varepsilon < \min_{j \in N} \frac{\rho c_j}{\displaystyle \sum_{i \in BS(j)} c_i}.
\]
In summary, given a prescribed centrality vector $c>0$ and a parameter $\rho>0$,
the inverse eigenvector centrality problem consists in finding arc weights
$w>0$ such that
\[
Bw=\rho c .
\]
As shown above, this system admits infinitely many positive solutions. In the
next section, we introduce six optimization formulations that select particular
solutions according to different design criteria.


\section{Optimization Problems}
\label{sec:optim}

In the following problems we look for solutions $w$ such that $w \geq \varepsilon \mathbf{1}$ for sufficiently small $\varepsilon >0$, in order to preserve the topology of the original network, while formulating the constraints using non-strict inequalities. Implementation details will be discussed in the numerical section.

\subsection{Problem Formulations}

\paragraph{(P1)\quad Minimum-$\ell_1$-norm formulation.}
\ \\
This strategy promotes sparsity in the deviation from the initial weights. It seeks solutions where as many weights as possible remain close or  equal to 1, concentrating the changes on a small number of arcs to achieve the target centrality.
\begin{equation*} \label{P1}
\begin{cases}
    \min\limits_{w} \ & \|w- \mathbf{1}\|_1 \\
    \text{s.t.} & Bw = \rho c \\
    				& w \geq \varepsilon \mathbf{1}
\end{cases}
\end{equation*} 

\paragraph{(P2) \quad Minimum-$\ell_2$-norm formulation.}
\ \\
This approach may represent a ``fairness''  strategy. By penalizing the square of the deviations, it discourages large changes in individual arc weights. As a consequence, we expect a network where the necessary adjustments are spread more evenly across all available arcs.
\begin{equation*} \label{P2}
\begin{cases}
    \min\limits_{w} \ & \|w -\mathbf{1}\|_2^2 \\
    \text{s.t.} & Bw = \rho c \\
    				& w \ge \varepsilon \mathbf{1}
\end{cases}
\end{equation*} 

\paragraph{(P3) \quad Minimum-$\ell_{\infty}$-norm formulation.}
\ \\
This is a conservative or ``min-max'' strategy. It focuses on minimizing the worst-case deviation. It is ideal in scenarios where there is a strict capacity limit or a cost threshold on how much any single arc weight can be modified.
\begin{equation*} \label{P3}
\begin{cases}
    \min\limits_{w} \ & \|w -{\mathbf{1}}\|_{\infty} \\
    \text{s.t.} & Bw = \rho c \\
    				& w \ge \varepsilon \mathbf{1}
\end{cases}
\end{equation*} 

\paragraph{(P4)\quad  Minimum-linear objective.}
\ \\
From a physical perspective, this formulation minimizes the total energy or resource consumption of the system, where $\beta_{ij}$ represents a unit cost of each link. In the context of social networks, it represents the minimization of the total social effort; if $\beta_{ij}$ is the cost of maintaining a relationship, (P4) identifies the most efficient distribution of interaction intensities $w_{ij}$ to satisfy the required influence levels with minimum collective investment.
\begin{equation*} \label{P4}
\begin{cases}
    \min\limits_{w} \ & \displaystyle \sum_{(i,j) \in E} \beta_{ij} w_{ij} \\
    \text{s.t.} & Bw = \rho c \\
    				& w \geq \varepsilon \mathbf{1}
\end{cases}
\end{equation*} 

\paragraph{(P5)\quad  Minimization of the number of nodes having at least one outgoing arc with weight $\neq 1$.}
\ \\
This model focuses on node-level efficiency. It is suitable for scenarios where changing an outgoing link's weight requires an administrative or technical cost for the source node, thus aiming to involve the minimum number of ``active'' nodes in the update.
In addition to the variables $w_{ij}$ for any $(i,j) \in E$, we introduce the auxiliary variables $x_{i} \in \{0,1\}$ for any $i \in N$, where $x_{i}=1$ if at least one outgoing arc from node $i$ has weight different from 1, and 0 otherwise. 
We consider the problem
\begin{equation*}\label{P5_model}
\left\{
\begin{array}{lll}
    \min\limits_{w,x}\ & \displaystyle \sum_{i \in N} x_{i} & 
    \\[4mm]
    \text{s.t.}\quad & B w = \rho c &
    \\[2mm]
    & 1 + (\varepsilon - 1) x_{i} \le w_{ij} \le 1 + M x_{i} & \qquad \forall (i,j) \in E
\end{array}
\right.
\end{equation*} 
where the parameter $M$ is set to
\begin{equation}\label{Emme}
	M = 
	\rho \dfrac{\max_{i \in N} c_i}{\min_{i \in N} c_i} - 1.
\end{equation} 
The big-M constraints guarantee that $w_{ij}=1$ for any outgoing arc from $i$ when $x_i=0$, while $\varepsilon \leq w_{ij} \leq 1+M$ holds when $x_i=1$. 
Notice that the chosen value for $M$ is not too restrictive as the following inequality holds for any $(i,j) \in E$ and $w$ such that $Bw=\rho c$:
\begin{align*}
	& w_{ij} 
	= 
	\dfrac{c_i \, w_{ij}}{c_i}
	\leq 
	\dfrac{\sum_{k \in BS(j)} c_k w_{kj}}{\min_{k \in N} c_k}
	=
	\dfrac{\rho c_j}{\min_{k \in N} c_k}
	\leq
	\dfrac{\rho\max_{k \in N} c_k}{\min_{k \in N} c_k}
	=
	1 + M.
\end{align*}

\paragraph{(P6) \quad Minimization of the number of arcs with weights $\neq 1$.}
\ \\
This MILP formulation targets the structural sparsity of the intervention. 
It is used when the goal is to modify the weights of as few links as possible, keeping the rest of the network's original interaction strengths intact. 
Similarly to (P5), it employs binary variables and big-M constraints ensuring the model finds the smallest set of structural changes necessary to satisfy the system's requirements.
We use variables $w_{ij}$ for any $(i,j) \in E$ and $x_{ij} \in \{0,1\}$ for any $(i,j) \in E$, where $x_{ij}=1$ if arc $(i,j)$ has weight $w_{ij} \neq 1$ and 0 otherwise. We consider the problem
\begin{equation*} \label{P6_model}
\left\{
\begin{array}{lll}
\min\limits_{w,x}\ & \displaystyle \sum_{(i,j) \in E} x_{ij} &
\\
\text{s.t.}\quad & B w = \rho c &
\\
& 1 + (\varepsilon - 1) x_{ij} \leq w_{ij} \leq 1 + M x_{ij} & \qquad \forall (i,j) \in E
\end{array}
\right.
\end{equation*} 
where $M$ is the same as in \eqref{Emme}. 

\begin{remark}
Problems (P1) and (P3) can be recast as linear programming problems through standard techniques. 
(P2) is a convex quadratic program, while (P5) and (P6) are mixed-integer linear programs. 
Moreover, we note that all the above problems except (P5) can be decomposed into $n$ problems, one for each node $j \in N$.
\end{remark}

\subsection{Solution bounds}
We denote with $W$ the set of positive solutions of the inverse problem, i.e., 
$$
	W = 
	\left\{ 
	w \in \mathbb{R}^m:\ 
	B \, w = \rho \, c, \quad w \geq \varepsilon\mathbf{1} 
	\right\}.
$$
We already observed, in the description of the inverse problem, that $W \neq \emptyset$ provided that $BS(j) \neq \emptyset$ for any $j \in N$.

\begin{lemma}\label{l:bound}
For any $w \in W$ the following inequalities hold:
\begin{enumerate}[a)]
\item $w_{ij} \leq \rho c_j/c_i$ for any $(i,j) \in E$;

\item $$
\dfrac{\rho \,c_j}{\max\limits_{i \in BS(j)} c_i}
\leq
\sum_{i \in BS(j)} w_{ij}
\leq
\dfrac{\rho \,c_j}{\min\limits_{i \in BS(j)} c_i}
\qquad \forall\ j \in N.
$$
\end{enumerate}
\end{lemma}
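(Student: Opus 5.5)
The plan is to work entirely with the $j$-th row of the system $Bw=\rho c$, written as in \eqref{e:Bw=rhoc}:
\[
\sum_{k \in BS(j)} c_k w_{kj} = \rho c_j .
\]
Every $w \in W$ satisfies $w \geq \varepsilon\mathbf{1} > 0$, and the prescribed centralities satisfy $c_k>0$. Hence every term $c_k w_{kj}$ on the left-hand side is positive. Both parts of Lemma~\ref{l:bound} will follow from this single identity together with positivity. No appeal to the Perron--Frobenius theorem is needed.

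\textbf{Part a).} Fix an arc $(i,j)\in E$, so that $i \in BS(j)$. Dropping all the other nonnegative terms of the row sum gives
\[
c_i w_{ij} \le \sum_{k\in BS(j)} c_k w_{kj} = \rho c_j .
\]
Dividing by $c_i>0$ yields $w_{ij}\le \rho c_j/c_i$. This is the same manipulation the paper used to justify the choice of $M$ in \eqref{Emme}, but without replacing $c_i$ by the global minimum.

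\textbf{Part b).} Set $m_j=\min_{i\in BS(j)} c_i$ and $M_j=\max_{i\in BS(j)} c_i$; both are positive, and $BS(j)\neq\emptyset$ by strong connectivity. Since each $w_{ij}>0$, we have the termwise sandwich
\[
m_j w_{ij}\le c_i w_{ij}\le M_j w_{ij}, \qquad i\in BS(j).
\]
Summing over $i\in BS(j)$ gives
\[
m_j \sum_{i \in BS(j)} w_{ij} \;\le\; \rho c_j \;\le\; M_j \sum_{i \in BS(j)} w_{ij}.
\]
Dividing the left inequality by $m_j$ gives the upper bound in b), and dividing the right inequality by $M_j$ gives the lower bound.

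There is no real obstacle here. The only points to state explicitly are that positivity of $w$ (guaranteed by $w\ge\varepsilon\mathbf{1}$) is what allows the termwise comparison, and that $BS(j)$ is nonempty, so that the minimum and maximum are well defined and positive.
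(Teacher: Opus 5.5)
Your proof is correct and matches the paper's argument: part a) drops the other nonnegative terms in the $j$-th row of $Bw=\rho c$, and part b) bounds each $c_i$ between $\min_{i\in BS(j)} c_i$ and $\max_{i\in BS(j)} c_i$ and then sums over $i \in BS(j)$. Your only additions are the explicit remarks on positivity of $w$ and nonemptiness of $BS(j)$, which the paper leaves implicit, and they are appropriate.
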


\begin{proof} \
\begin{enumerate}[a)]
 
\item For any arc $(i,j) \in E$ we have
$c_i w_{ij} \leq \sum_{k \in BS(j)} c_k w_{kj} = \rho \, c_j$.

\item It follows from~\eqref{e:Bw=rhoc} that
\begin{align*}
	\rho \, c_j 
	= 
	\sum_{i \in BS(j)} c_i w_{ij}
	\leq 
	\left[ \max\limits_{i \in BS(j)} c_i \right] \sum_{i \in BS(j)} w_{ij},
	\\
	\rho \, c_j 
	= 
	\sum_{i \in BS(j)} c_i w_{ij}
	\geq 
	\left[ \min\limits_{i \in BS(j)} c_i \right] \sum_{i \in BS(j)} w_{ij}.
\end{align*}

\end{enumerate}
\end{proof}

\begin{theorem}
For any $w \in W$ the following inequalities hold:
\begin{enumerate}[a)]

\item $$
\sum_{j \in N} 
\dfrac{\left| \rho \, c_j - \sum\limits_{i \in BS(j)} c_i \right|}{\max\limits_{i \in BS(j)} c_i}
\leq
\| w - \mathbf{1} \|_{1} 
\leq 
\sum_{(i,j) \in E} \max \left\{ 1 , \ \dfrac{\rho c_j}{c_i} - 1 \right\},
$$

\item $$
\sum_{j \in N} 
\dfrac{\left( \rho \, c_j - \sum\limits_{i \in BS(j)} c_i \right)^2}{ \sum\limits_{i \in BS(j)} c_i^2}
\leq
\| w - \mathbf{1} \|^2_{2} 
\leq 
m + 
\sum_{j \in N} \left[ \dfrac{\rho \,c_j}{\min\limits_{i \in BS(j)} c_i} \right]^2 
- 2 \sum_{j \in N} \dfrac{\rho \,c_j}{\max\limits_{i \in BS(j)} c_i},
$$

\item $$
\max_{j \in N} 
\dfrac{\left| \rho \, c_j - \sum\limits_{i \in BS(j)} c_i \right|}{\sum\limits_{i \in BS(j)} c_i}
\leq
\| w - \mathbf{1} \|_{\infty} 
\leq 
\max \left\{ 1 , \ \max_{(i,j) \in E} \left[ \dfrac{\rho c_j}{c_i} - 1 \right]\right\},
$$

\item $$
\rho \sum_{j \in N} \left[ c_j \min\limits_{i \in BS(j)} \dfrac{\beta_{ij}}{c_i} \right]
\leq
\sum_{(i,j) \in E} \beta_{ij} \, w_{ij}  
\leq 
\rho \sum_{j \in N} \left[ c_j \max\limits_{i \in BS(j)} \dfrac{\beta_{ij}}{c_i} \right],
$$

\item $$
\left| \left\{ i \in N: \ \exists\ (i,j) \in E \ \text{ with } c_i > \rho c_j \right\} \right|
\leq 
F(w)
\leq 
n,
$$
where $F(w) := | \{ i \in N: \ \exists\ (i,j) \in E \ \text{ with } w_{ij} \neq 1 \}|$ is the objective function of problem (P5).
\end{enumerate}	
\end{theorem}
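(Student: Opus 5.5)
The plan is to treat each part node by node, through the decomposition \eqref{e:Bw=rhoc}: for every $j\in N$ the weights entering $j$ satisfy $\sum_{i\in BS(j)} c_i w_{ij}=\rho c_j$. Each lower bound will come from rewriting this constraint in terms of the deviations $d_{ij}:=w_{ij}-1$, namely
\[
\sum_{i\in BS(j)} c_i d_{ij} = \rho c_j - \sum_{i\in BS(j)} c_i =: r_j .
\]
Each upper bound will come from Lemma~\ref{l:bound} combined with $w\ge\varepsilon\mathbf{1}>0$.

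For a), the lower bound follows from $|r_j|\le \max_{i\in BS(j)} c_i\sum_{i\in BS(j)}|d_{ij}|$. Summing over $j$ gives the bound, since the sets of incoming arcs partition $E$. For the upper bound, $0<w_{ij}\le\rho c_j/c_i$ by Lemma~\ref{l:bound} a), so $|w_{ij}-1|\le\max\{1,\rho c_j/c_i-1\}$. Summing over arcs gives the claim.

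Part c) uses the same two ingredients. For the lower bound, $|r_j|\le \|d\|_\infty\sum_{i\in BS(j)} c_i$ for each $j$; dividing and taking the maximum over $j$ gives it. The upper bound is the arc-wise bound above, maximized over $E$.

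For b), the lower bound comes from Cauchy–Schwarz: $r_j^2\le \big(\sum_{i\in BS(j)} c_i^2\big)\sum_{i\in BS(j)} d_{ij}^2$, then sum over $j$. For the upper bound, expand $\|w-\mathbf{1}\|_2^2=\sum w_{ij}^2-2\sum w_{ij}+m$. I bound $\sum_{i\in BS(j)} w_{ij}^2\le\big(\sum_{i\in BS(j)} w_{ij}\big)^2$, which holds because $w\ge0$, and then apply the upper bound of Lemma~\ref{l:bound} b). The linear term $-2\sum_{i\in BS(j)} w_{ij}$ is bounded using the lower bound of Lemma~\ref{l:bound} b). Summing over $j$ gives the stated expression.

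For d), I write $\sum_{i\in BS(j)}\beta_{ij}w_{ij}=\sum_{i\in BS(j)}(\beta_{ij}/c_i)\,c_iw_{ij}$. The factors $c_iw_{ij}\ge0$ sum to $\rho c_j$, so this quantity lies between $\rho c_j\min_{i\in BS(j)}\beta_{ij}/c_i$ and $\rho c_j\max_{i\in BS(j)}\beta_{ij}/c_i$. Summing over $j$ finishes d).

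For e), the upper bound $F(w)\le n$ is trivial. For the lower bound, take a node $i$ with some arc $(i,j)\in E$ satisfying $c_i>\rho c_j$. Lemma~\ref{l:bound} a) gives $w_{ij}\le\rho c_j/c_i<1$, so $w_{ij}\ne1$ and $i$ is counted in $F(w)$. Hence the set in the lower bound is contained in the set counted by $F$.

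None of these steps is deep. The only place needing care is the quadratic upper bound in b): it requires the sign of $w$ to justify $\sum w_{ij}^2\le(\sum w_{ij})^2$, and the two parts of Lemma~\ref{l:bound} b) must be applied in the correct directions to the quadratic and linear terms.
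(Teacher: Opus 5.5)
Your proof is correct and follows the paper's argument. The upper bounds in a)--c) come from the arc-wise estimate $|w_{ij}-1|\le\max\{1,\rho c_j/c_i-1\}$, and in b) from $\sum_{i\in BS(j)} w_{ij}^2\le(\sum_{i\in BS(j)} w_{ij})^2$ combined with both sides of Lemma~\ref{l:bound} b); d) and e) are handled exactly as in the paper. The one small difference is in the lower bounds of a)--c): you apply H\"older/Cauchy--Schwarz directly to $\sum_{i\in BS(j)} c_i(w_{ij}-1)=\rho c_j-\sum_{i\in BS(j)}c_i$, whereas the paper drops $w\ge\varepsilon\mathbf{1}$, splits the relaxed problem by node and cites the dual-norm point-to-hyperplane distance formula, which rests on the same inequality and also shows the bounds are the exact optimal values of the relaxed problems.
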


\begin{proof} \
\begin{enumerate}[a)]

\item Since Lemma~\ref{l:bound} a) implies that $w_{ij}-1 \leq \rho c_j/c_i - 1$ holds for any $(i,j) \in E$, we get
\begin{align}\label{e:|w-1|}
|w_{ij}-1| = 
\max\{ 1-w_{ij} , w_{ij}-1 \}
\leq
\max\left\{ 1 , \dfrac{\rho c_j}{c_i} - 1 \right\},
\quad \forall\ (i,j) \in E.
\end{align}
Therefore, we have
$$
\| w - \mathbf{1} \|_1
=
\sum_{(i,j) \in E} |w_{ij}-1|
\leq 
\sum_{(i,j) \in E} \max\left\{ 1 , \ \dfrac{\rho c_j}{c_i} - 1 \right\}.
$$
Moreover, given any vector $x_0$, the minimum distance with respect to the $p$-norm between $x_0$ and the points belonging to the hyperplane $a^\top x=b$ is given by the following formula (see, e.g., \cite{Man99}):
$$
	\min \left\{ 
	\| x-x_0 \|_p : \ a^\top x = b
	\right\}
	=
	\dfrac{|a^\top x_0 - b|}{\|a\|_q},
$$
where $1/p+1/q = 1$. 
For any $j \in N$, we denote the subvector $w^j = (w_{ij})_{i \in BS(j)}$ and we remark that in the linear system~\eqref{e:Bw=rhoc} the variables of $w^j$ are present only in the equation associated to $j$. 
Therefore, we have the following chain of equalities and inequalities:
\begin{align*}
	& \min \left\{ \|w-\mathbf{1}\|_1 : \ w \in W\right\}
	\geq \min \left\{ \|w-\mathbf{1}\|_1 : \ A^{\top}_w c = \rho c \right\}
	\\
	& \qquad = \min \left\{ \sum_{j \in N} \sum_{i \in BS(j)} |w_{ij}-1| : \ \sum_{i \in BS(j)} c_i w_{ij} = \rho \, c_j,
	\quad \forall\ j \in N \right\}
	\\
	& \qquad = \sum_{j \in N} 
	\min \left\{ \|w^j-\mathbf{1}\|_1 : \ \sum_{i \in BS(j)} c_i w_{ij} = \rho \, c_j \right\}
	\\
	& \qquad = \sum_{j \in N} 
	\dfrac{\left| \rho \, c_j - \sum_{i \in BS(j)} c_i \right|}{\max_{i \in BS(j)} c_i},
\end{align*}  
where the second equality is due to the decomposability of the optimization problem.

\item Lemma~\ref{l:bound} b) implies that 
\begin{align*}
\| w - \mathbf{1} \|^2_2 & = \sum_{(i,j) \in E} (w_{ij}-1)^2
= \sum_{(i,j) \in E} w_{ij}^2 - 2 \sum_{(i,j) \in E} w_{ij} + m
\\
& = \sum_{j \in N} \sum_{i \in BS(j)} w_{ij}^2 - 2 \sum_{j \in N} \sum_{i \in BS(j)} w_{ij} + m
\\
& \leq \sum_{j \in N} \left( \sum_{i \in BS(j)} w_{ij} \right)^2 - 2 \sum_{j \in N} \sum_{i \in BS(j)} w_{ij} + m
\\
& \leq \sum_{j \in N} \left[ \dfrac{\rho \,c_j}{\min\limits_{i \in BS(j)} c_i} \right]^2 
- 2 \sum_{j \in N} \dfrac{\rho \,c_j}{\max\limits_{i \in BS(j)} c_i} + m
\end{align*}
Arguing as in a), we get
\begin{align*}
	& \min \left\{ \|w-\mathbf{1}\|^2_2 : \ w \in W\right\}
	\geq \min \left\{ \|w-\mathbf{1}\|^2_2 : \ A^{\top}_w c = \rho c \right\}
	\\
	& \qquad = \min \left\{ \sum_{j \in N} \sum_{i \in BS(j)} (w_{ij}-1)^2 : \ \sum_{i \in BS(j)} c_i w_{ij} = \rho \, c_j,
	\quad \forall\ j \in N \right\}
	\\
	& \qquad= \sum_{j \in N} 
	\min \left\{ \|w^j-\mathbf{1}\|^2_2 : \ \sum_{i \in BS(j)} c_i w_{ij} = \rho \, c_j \right\}
	\\
	& \qquad= \sum_{j \in N} 
	\left[ 
	\dfrac{\left| \rho \, c_j - \sum_{i \in BS(j)} c_i \right|}{ \| (c_i)_{i \in BS(j)} \|_2}
	\right]^2
	\\
	& \qquad= \sum_{j \in N} 
	\dfrac{\left( \rho \, c_j - \sum_{i \in BS(j)} c_i \right)^2}{ \sum_{i \in BS(j)} c_i^2}.
\end{align*}

\item It follows from~\eqref{e:|w-1|} that
$$
\| w - \mathbf{1} \|_{\infty} 
=
\max_{(i,j) \in E} |w_{ij}-1|
\leq 
\max \left\{ 1 , \ \max_{(i,j) \in E} \left[ \dfrac{\rho c_j}{c_i} - 1 \right]\right\}.
$$
Arguing as in a), we get
\begin{align*}
	& \min \left\{ \|w-\mathbf{1}\|_\infty : \ w \in W\right\}
	\geq \min \left\{ \|w-\mathbf{1}\|_\infty : \ A^{\top}_w c = \rho c \right\}
	\\
	& \qquad = \min \left\{ \max_{j \in N} \max_{i \in BS(j)} |w_{ij}-1| : \ \sum_{i \in BS(j)} c_i w_{ij} = \rho \, c_j,
	\quad \forall\ j \in N \right\}
	\\
	& \qquad = \max_{j \in N} \left[
	\min \left\{ \|w^j-\mathbf{1}\|_\infty : \ \sum_{i \in BS(j)} c_i w_{ij} = \rho \, c_j \right\}
	\right]
	\\
	& \qquad = \max_{j \in N} 
	\dfrac{\left| \rho \, c_j - \sum_{i \in BS(j)} c_i \right|}{\sum_{i \in BS(j)} c_i}.
\end{align*}  

\item It follows from~\eqref{e:Bw=rhoc} that
\begin{align*}
\sum_{(i,j) \in E} \beta_{ij} w_{ij} 
& = \sum_{j \in N} \sum_{i \in BS(j)} \beta_{ij} w_{ij}
= \sum_{j \in N} \sum_{i \in BS(j)} \dfrac{\beta_{ij}}{c_i} c_i w_{ij}
\\
& \leq \sum_{j \in N}
\left[ \max_{i \in BS(j)} \dfrac{\beta_{ij}}{c_i} \right]  \sum_{i \in BS(j)}  c_i w_{ij}
= 
\rho \sum_{j \in N}
\left[ \max_{i \in BS(j)} \dfrac{\beta_{ij}}{c_i} \right]   c_j,
\end{align*}
and
\begin{align*}
	& \sum_{(i,j) \in E} \beta_{ij} w_{ij} 
	= 
	\sum_{j \in N} \sum_{i \in BS(j)} \dfrac{\beta_{ij}}{c_i} c_i w_{ij}
	\\
	& \qquad
	\geq 
	\sum_{j \in N}
	\left[ \min_{i \in BS(j)} \dfrac{\beta_{ij}}{c_i} \right]  \sum_{i \in BS(j)}  c_i w_{ij}
	= 
	\rho \sum_{j \in N}
	\left[ \min_{i \in BS(j)} \dfrac{\beta_{ij}}{c_i} \right]   c_j,
\end{align*}

\item The upper bound is obvious. 
For any arc $(i,j) \in E$ such that $c_i > \rho c_j$, Lemma~\ref{l:bound} guarantees that $w_{ij} \leq \rho c_j / c_i < 1$ holds for any $w \in W$. 
Therefore, we get the inclusion
$$
	\left\{ i \in N: \ \exists\ (i,j) \in E \ \text{ with } c_i > \rho c_j \right\} 
	\subseteq
	\left\{ i \in N: \ \exists\ (i,j) \in E \ \text{ with } w_{ij} \neq 1 \right\},
$$
that implies the thesis. 
\end{enumerate}	
\end{proof}
In the following result we prove that problem (P6) can be solved in closed form. 

\begin{theorem}
Let $0 < \varepsilon < \min\left\{ 1 , \ \min\limits_{j \in N} \dfrac{\rho c_j}{\sum_{i \in BS(j)} c_i} \right\}$.
For any $j \in N$, we set $k_j:=|BS(j)|$, order the set $\{c_i: \ i \in BS(j)\}$ in non-decreasing order as $c_{i_1} \leq c_{i_2} \leq  \dots \leq c_{i_{k_j}}$ and denote
$$
h_j := 
\max\left\{
h \in \{0,\dots,k_j-1\}: \ \sum_{r=1}^h c_{i_r}  + \sum_{r=h+1}^{k_j} \varepsilon c_{i_r} \leq \rho c_j
\right\}.
$$
An optimal solution $w^*$ of problem (P6) is given by the following rule: for any $j \in N$, if $\rho c_j = \sum_{i \in BS(j)} c_i$, then $w^*_{ij}=1$ for any $i \in BS(j)$; else 
$$
	w^*_{i_r, j} =
	\begin{cases}
	1 & \text{for any } r = 1,\dots,h_j,
	\\
	\dfrac{\rho c_j - \sum_{s=1}^{h_j} c_{i_s} 
	- \sum_{s=h_j+2}^{k_j} \varepsilon c_{i_s}}{c_{i_r}} & \text{if } r = h_j+1,
	\\
	\varepsilon & \text{for any } r = h_j+2,\dots,k_j,
	\end{cases}
$$
and the corresponding optimal value is equal to $m - \sum_{j \in N} d_j$, where
$$
d_j = 
\begin{cases}
k_j & \text{if } \rho c_j = \sum_{i \in BS(j)} c_i,
\\
h_j & \text{otherwise}.
\end{cases}	
$$
\end{theorem}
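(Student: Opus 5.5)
The argument rests on the decomposability noted in the Remark after the formulations. The constraints of (P6) couple arcs only through the equation of their common head $j$ in \eqref{e:Bw=rhoc}, and the objective is a sum over arcs. Hence (P6) splits into $n$ independent subproblems. For fixed $j$, we minimize the number of arcs $(i,j)$, $i\in BS(j)$, with $w_{ij}\neq 1$, subject to $\sum_{i\in BS(j)} c_iw_{ij}=\rho c_j$ and $\varepsilon\le w_{ij}\le 1+M$. Equivalently, we maximize the number $d$ of in-arcs of $j$ that keep weight $1$. It therefore suffices to show that the maximum for node $j$ is $d_j$ and that the displayed $w^*$ attains it. The total optimal value is then $m-\sum_j d_j$.

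\textbf{Case $\rho c_j=\sum_{i\in BS(j)}c_i$.} Setting all weights to $1$ is feasible and gives $d_j=k_j$, which is clearly optimal.

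\textbf{Case $\rho c_j\neq\sum_{i\in BS(j)}c_i$.} Not all arcs can keep weight $1$, so $d\le k_j-1$. Fix a set $S\subseteq BS(j)$ of arcs kept at $1$. The remaining arcs have weights in $[\varepsilon,1+M]$ and must supply the residual $\rho c_j-\sum_{i\in S}c_i$. Since at least one arc lies outside $S$, a necessary condition is
\[
\sum_{i\in S}c_i+\varepsilon\sum_{i\notin S}c_i\le \rho c_j .
\]
Rewrite the left side as $\varepsilon\sum_{i\in BS(j)}c_i+(1-\varepsilon)\sum_{i\in S}c_i$. Because $\varepsilon<1$, for fixed $|S|=h$ this quantity is minimized by taking $S$ to be the $h$ smallest $c_i$'s, namely $\{i_1,\dots,i_h\}$. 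It is also nondecreasing in $h$ along this ordering. Therefore any feasible choice has $|S|\le h_j$. The index $h_j$ is well defined: for $h=0$ the condition reads $\varepsilon\sum c_i\le\rho c_j$, which holds by the assumed bound on $\varepsilon$.

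\textbf{Attainability.} It remains to check that the proposed $w^*$ is feasible. Take $w=1$ on $i_1,\dots,i_{h_j}$, $w=\varepsilon$ on $i_{h_j+2},\dots,i_{k_j}$, and the residual weight on $i_{h_j+1}$. The equation holds by construction, and the residual weight is at least $\varepsilon$ by the definition of $h_j$. The upper bound $1+M$ follows from the argument showing $M$ is not restrictive, i.e. Lemma~\ref{l:bound} a). This choice has at least $h_j$ arcs equal to $1$ (possibly more, if the residual weight happens to equal $1$, which only helps). Combined with the upper bound, exactly $h_j$ are counted when $h_j$ is maximal: if $w^*_{i_{h_j+1},j}=1$ then $h_j+1$ arcs would satisfy the defining inequality, contradicting maximality when $h_j+1\le k_j-1$.

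\textbf{Main obstacle.} The delicate step is the boundary case $h_j=k_j-1$. There the residual weight equals $1$ only if $\rho c_j=\sum_i c_i$, which is excluded, so this case is fine. More generally, the exchange argument showing that the smallest $c_i$'s are optimal for $S$ needs care with the strictness of $w\ne 1$ and the big-$M$ modeling. I would verify that $x_{ij}=0$ forces $w_{ij}=1$, and that $x_{ij}=1$ allows the full range $[\varepsilon,1+M]$, so the combinatorial reformulation above is exact.
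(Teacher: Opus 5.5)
Your proof is correct and follows essentially the same route as the paper. Both arguments decompose by head node, check that $w^*$ is feasible, and bound the number of unit-weight arcs via the necessary condition $\sum_{i\in S}c_i+\varepsilon\sum_{i\notin S}c_i\le\rho c_j$ together with the fact that the $|S|$ smallest centralities minimize its left-hand side. The paper packages this as a proof by contradiction with an explicit exchange count, while your direct version (first $|S|\le k_j-1$, then maximality of $h_j$) is slightly cleaner. In particular it covers the boundary case $h_j=k_j-1$, where the paper's final strict inequality relies on the case assumption rather than on the definition of $h_j$, and it verifies that the residual weight is not $1$.
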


\begin{proof}
First, we prove that $w^* \in W$. 
In fact, $B w^* = \rho c$ holds by definition of $w^*$ and $w^* \geq \varepsilon \mathbf{1}$ follows from the definition of $h_j$. 
The number of arcs $(i,j)$ with weights $w^*_{ij}$ different from 1 is equal to
\begin{align*}
	m - |\{ (i,j) \in E: \ w^*_{ij}=1 \}|
	= 
	m - \sum_{j \in N} | \{ i \in BS(j): \ w^*_{ij}=1 \} |
	=
	m - \sum_{j \in N} d_j.
\end{align*}
Let $w \in W$ be any feasible solution. 
Notice that  problem (P6) is decomposable into $n$ problems, one for each node $j \in N$. 
Given $j \in N$, we denote $I=\{ i \in BS(j):\ w_{ij}=1 \}$ and we prove that 
$|I| \leq d_j$ for any $j \in N$.
Suppose by contradiction that $|I| \geq d_j+1$. Then, one has $d_j=h_j$. Moreover, we have
\begin{align*}
	| \{ i_r \in I:\ r \geq h_j+2 \} | 
	& = |I| - | \{ i_r \in I: \ r \leq h_j+1\} |
	\\
	& \geq h_j + 1 - | \{ i_r \in I: \ r \leq h_j+1\} |
	\\
	& = | \{ i_r \notin I: \ r \leq h_j+1\} |.
\end{align*}
Since $c_{i_r}$ are in non-decreasing order, we get 
$$
	\sum_{\substack{i_r \in I: \\ r \geq h_j+2}} c_{i_r}
	\geq 
	\sum_{\substack{i_r \notin I: \\ r \leq h_j+1}} c_{i_r}.
$$
Hence, we have
$$
(1-\varepsilon) \sum_{\substack{i_r \in I: \\ r \geq h_j+2}} c_{i_r}
\geq 
(1-\varepsilon) \sum_{\substack{i_r \notin I: \\ r \leq h_j+1}} c_{i_r},
$$
that is equivalent to
\begin{align}\label{e:P6proof}
\sum_{\substack{i_r \in I: \\ r \geq h_j+2}} c_{i_r}
+
\varepsilon \sum_{\substack{i_r \notin I: \\ r \leq h_j+1}} c_{i_r}
\geq 
\sum_{\substack{i_r \notin I: \\ r \leq h_j+1}} c_{i_r}
+
\varepsilon \sum_{\substack{i_r \in I: \\ r \geq h_j+2}} c_{i_r}.
\end{align}
The definition of $h_j$ and~\eqref{e:P6proof} imply the following chain of equalities and inequalities:
\begin{align*}
\rho c_j & = \sum_{i \in BS(j)} c_i w_{ij}
\\
& = \sum_{i \in I} c_i w_{ij} + \sum_{i \in BS(j) \setminus I} c_i w_{ij}
\\
& \geq \sum_{i \in I} c_i + \sum_{i \in BS(j) \setminus I} \varepsilon c_i
\\
& = \sum_{\substack{i_r \in I \\ r \leq h_j+1}} c_{i_r}
+
\sum_{\substack{i_r \in I \\ r \geq h_j+2}} c_{i_r}
+
\sum_{\substack{i_r \notin I \\ r \leq h_j+1}} \varepsilon c_{i_r}
+
\sum_{\substack{i_r \notin I \\ r \geq h_j+2}} \varepsilon c_{i_r}
\\
& \geq \sum_{\substack{i_r \in I \\ r \leq h_j+1}} c_{i_r}
+
\sum_{\substack{i_r \notin I: \\ r \leq h_j+1}} c_{i_r}
+
\sum_{\substack{i_r \in I: \\ r \geq h_j+2}} \varepsilon c_{i_r}
+
\sum_{\substack{i_r \notin I \\ r \geq h_j+2}} \varepsilon c_{i_r}
\\
& = \sum_{\substack{r \leq h_j+1}} c_{i_r}
+
\sum_{\substack{r \geq h_j+2}} \varepsilon c_{i_r}
\\
& > \rho c_j,
\end{align*}
which is impossible.
Therefore, we have
\begin{align*}
	|\{ (i,j) \in E: \ w_{ij} \neq 1 \}|
	& = m - \sum_{j \in N} | \{ i \in BS(j): \ w_{ij}=1 \} |
	\\
	& \geq
	 m - \sum_{j \in N} d_j
	 = 
	 |\{ (i,j) \in E: \ w^*_{ij} \neq 1 \}|,
\end{align*}
thus $w^*$ is an optimal solution of (P6).
\end{proof}


\section{Numerical experiments}
\label{numexp}

We now show some numerical experiments on the optimization problems (P1)--(P6) described in Section~\ref{sec:optim} for some sample graphs. 
All the problems have been solved by means of the MATLAB optimization toolbox.
 
We start with a small directed graph with 8 nodes and 20 arcs (see Fig.~\ref{fig:2}), whose adjacency matrix is
$$
A = \begin{pmatrix}
0 &    1 &    1 &    0 &    1 &    0 &    0 &    1
\\
0 &    0  &   0  &   0  &   1  &   1  &   0  &   0
\\
1 &    1   &  0   &  0   &  0   &  0   &  0   &  0
\\
0  &   0    & 1    & 0    & 0    & 0    & 1    & 1
\\
0   &  0    & 1    & 1    & 0    & 0    & 1    & 1
\\
0    & 0    & 0    & 0    & 1    & 0    & 1    & 0
\\
0    & 0    & 0    & 0    & 1    & 0    & 0    & 0
\\
1    & 1    & 0    & 0    & 0    & 0    & 0    & 0
\end{pmatrix}.
$$

\begin{figure}[tbp]
\begin{center}
\includegraphics[width=0.99\textwidth]{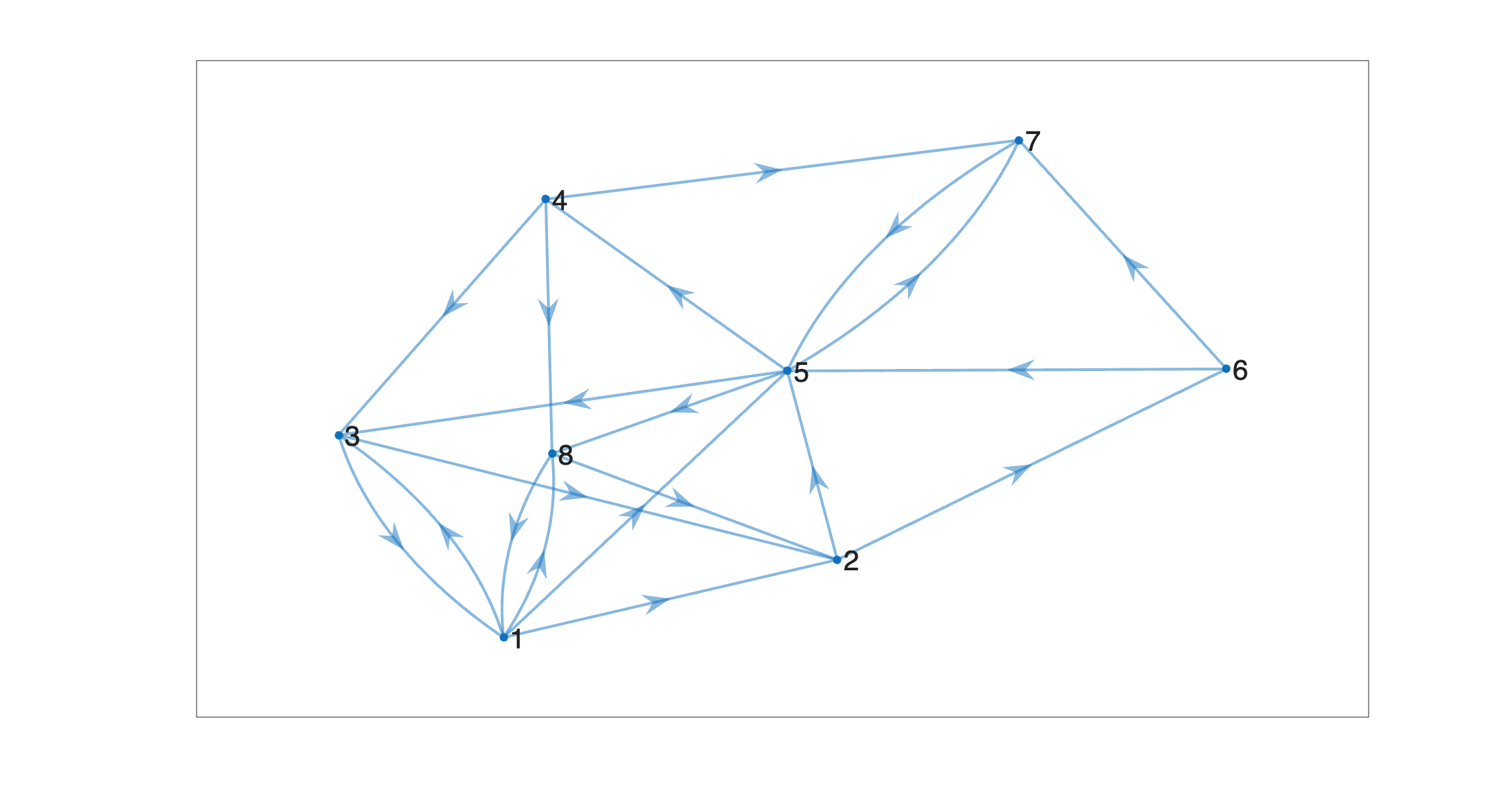}
\end{center}
\caption{Directed graph with 8 nodes and 20 arcs.}
\label{fig:2}
\end{figure}

We notice that the graph is strongly connected, the spectral radius $\rho_0 = \rho(A^\top)=2.5367$ and the centrality eigenvector is
$$
c_0 = ( 0.1138,  \ 0.1586,    \ 0.1443,    \ 0.0713,  \   0.1810,  \   0.0625, \    0.1241,   \ 0.1443).
$$
We now consider the inverse eigenvector centrality problem with $\rho=\rho_0$ and 
$$
c = ( 0.1586,  \ 0.1138,    \ 0.1443,    \ 0.0713,  \   0.1810,  \   0.0625, \    0.1241,   \ 0.1443).
$$
We note that $c$ coincides with $c_0$ except for the first two components, which are the first two components of $c_0$ in reverse order. 
We set $\beta_{ij}=1$ for any $(i,j) \in E$ in problem (P4) and $\varepsilon=10^{-3}$ in all the problems (P1)--(P6). 
The optimal solutions of (P1)--(P6) are reported in Table~\ref{tab:1} and shown in Fig.~\ref{fig:3}, where the arcs with a weight different from 1 are highlighted in red. 
We note that each solution differs from all the others, except for solutions of (P1) and (P6), which are very similar.
As expected, the solution of (P1) is more sparse (i.e., with few weights other than 1) than the solutions of (P2) and (P3).
The weights of the solution of (P4) are in a wider range than the other solutions: several values are equal to $\varepsilon$, while other values are greater than 2. 

\begin{table}[tbp]
\begin{center}
\begin{tabular}{c@{\quad\qquad}c@{\qquad}c@{\qquad}c@{\qquad}c@{\qquad}c@{\qquad}c}
\toprule
 & Opt. sol. & Opt. sol. & Opt. sol. & Opt. sol. & Opt. sol. & Opt. sol. \\
 Arc   & (P1) & (P2) & (P3) & (P4) & (P5) & (P6) \\
\midrule
(1,2) &    0.0010 &   0.6234 &   0.6058 &   1.8176  &  0.9088 &   0.0010 \\
(1,3) &    1.0000 &   0.8871 &   1.3442 &   0.0010 &   0.7173 &   1.0000 \\
(1,5) &    1.0000 &   1.0000 &   0.6058 &   2.8920 &   0.0010 &   1.0000 \\
(1,8) &   1.0000 &   0.8871  &  1.3442 &   0.0010  &  0.7173  &  1.0000 \\
(2,5) &   1.0000 &   1.0000 &   1.3363 &   0.0010  &  2.3928  &  1.0000 \\
(2,6) &   1.3942 &   1.3942 &   1.3942 &   1.3942 &   1.3942  &  1.3942 \\
(3,1) &  1.7884  &  1.3942  &  1.3942  &  0.0010  &  1.7884  &  1.7884 \\
(3,2) &   1.0000 &   0.6574 &   0.7283 &   0.0010 &   0.0010 &   0.9989 \\
(4,3) &   1.0000 &   0.9492 &   0.6058 &   0.0010 &   1.0000 &   1.0000 \\
(4,7) &   1.0000 &   1.0000 &   0.6058 &   0.0010  &  1.0000 &   1.0000 \\
(4,8) &   1.0000 &   0.9492 &   0.6058 &   0.0010  &  1.0000 &   1.0000 \\
(5,3) &   0.7521 &   0.8712 &   0.6058 &   2.0217 &   1.0000 &   0.7521 \\
(5,4) &  1.0000  &  1.0000  &  1.0000  &  1.0000  &  1.0000  &  1.0000 \\
(5,7) &   1.0000 &   1.0000 &   1.2916 &   1.7390 &   1.0000 &   1.0000 \\
(5,8) &   0.7521 &   0.8712 &   0.6058 &   2.0217 &   1.0000 &   0.7521 \\
(6,5) &   1.0000 &   1.0000 &   0.6058 &   0.0010 &   1.0000 &   1.0000 \\
(6,7) &  1.0000  &  1.0000  &  0.6058  &  0.0010  &  1.0000  &  1.0000 \\
(7,5) &   1.0000 &   1.0000 &   1.3942 &   0.0010 &   1.0000 &   1.0000 \\
(8,1) &   1.0000 &   1.3942 &   1.3942 &   2.7874 &   1.0000 &   1.0000 \\
(8,2) &   0.9989 &   0.6574 &   0.6058  &  0.0010 &   1.0000 &   1.0000 \\
\bottomrule
\end{tabular}
\end{center}
\caption{Optimal solutions of problems (P1)--(P6) related to the graph in Fig.~\ref{fig:2}.}
\label{tab:1}
\end{table}

\begin{figure}[tbp]
	\begin{center}
		\includegraphics[width=0.99\textwidth]{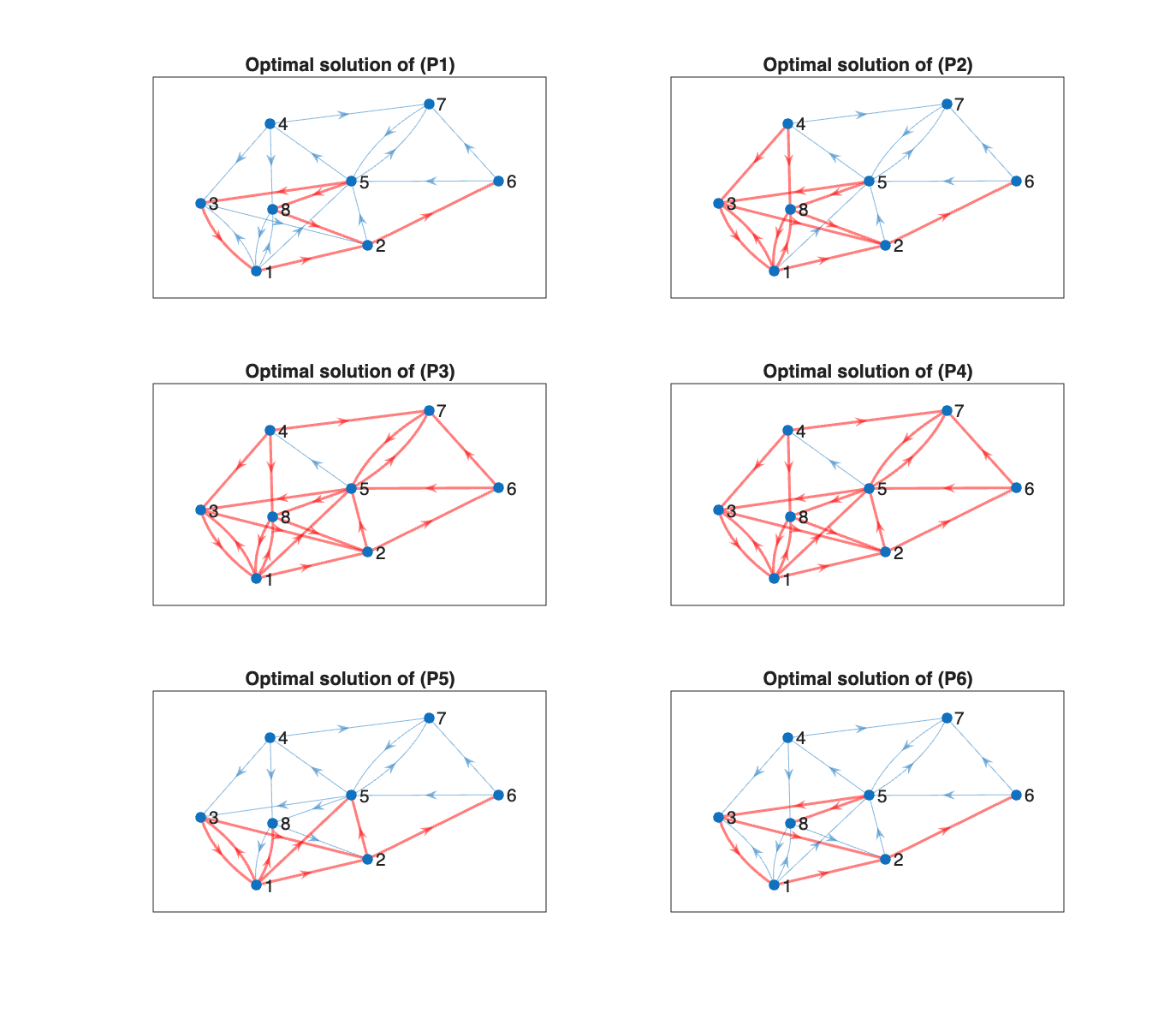}
	\end{center}
	\caption{Comparison between the optimal solutions of problems (P1)-(P6). The arcs with a weight different from 1 are highlighted in red.}
	\label{fig:3}
\end{figure}

Next, we solved problems (P1)--(P6) on three medium-sized networks which are well known in the literature (see, e.g.,~\cite{Pe20}) and described in Table~\ref{tab:instances}.
As in the previous example, we set $\rho=\rho_0$ and $c$ equal to $c_0$ except for the first two components, which are the first two components of $c_0$ in reverse order. 

\begin{table}[bp]
\begin{center}
\begin{tabular}{l@{\qquad}c@{\qquad}c}
\toprule
Network & Nodes & Links \\ 
\midrule
Rhesus monkey~\cite{Sa72} & 16 & 111 
\\
High tech company~\cite{Kra87} & 21 & 232 
\\
Bison~\cite{Lot79} & 26 & 314 
\\
\bottomrule
\end{tabular}	
\end{center}
\caption{Test networks.}
\label{tab:instances}
\end{table}

To briefly highlight the difference between the optimal solutions of (P1)--(P6), we show in Tables~\ref{tab:monkey}, \ref{tab:htc} and \ref{tab:bison} the value of each objective function calculated at the optimal solutions of the six problems.
The results show that the criteria used in the six optimization problems are quite different from each other.
Indeed, the values of each objective function calculated in the optimal solutions of the six problems vary within a fairly wide range.

\begin{table}[tbp]
\begin{center}
\begin{tabular}{l@{\qquad}r@{\quad}r@{\quad}r@{\quad}r@{\quad}r@{\qquad}r}
	\toprule
	                & Opt. sol. & Opt. sol. & Opt. sol. & Opt. sol. & Opt. sol. & Opt. sol. \\
	                &      (P1) &      (P2) &      (P3) &      (P4) &      (P5) &      (P6) \\ \midrule
	Obj. func. (P1) &    7.4838 &    9.5436 &   61.7243 &  155.2115 &   14.8129 &   18.4516 \\
	Obj. func. (P2) &    3.1312 &    1.8597 &    6.0634 &   19.3768 &    4.3120 &    9.1841 \\
	Obj. func. (P3) &    2.5593 &    0.7894 &    0.6091 &    7.6046 &    2.5593 &    8.5505 \\
	Obj. func. (P4) &  112.9506 &  113.8430 &  104.7592 &   76.4015 &  113.2091 &  123.9184 \\
	Obj. func. (P5) &    3.0000 &   15.0000 &   16.0000 &   16.0000 &    3.0000 &    8.0000 \\
	Obj. func. (P6) &   11.0000 &   69.0000 &  111.0000 &  111.0000 &   16.0000 &   11.0000 \\ \bottomrule
\end{tabular}
\end{center}
\caption{Rhesus monkey network: values of the objective functions of problems (P1)-(P6) calculated at the optimal solutions of the same problems.}
\label{tab:monkey}
\end{table}

\begin{table}[tbp]
	\begin{center}
		\begin{tabular}{l@{\qquad}r@{\quad}r@{\quad}r@{\quad}r@{\quad}r@{\qquad}r}
			\toprule
			& Opt. sol. & Opt. sol. & Opt. sol. & Opt. sol. & Opt. sol. & Opt. sol. \\
			&      (P1) &      (P2) &      (P3) &      (P4) &      (P5) &      (P6) \\ 
			\midrule
	Obj. func. (P1) &    7.6652 &  11.3575 &  78.2698 & 330.1895 &  29.8715 &  16.2246 \\
	Obj. func. (P2) & 3.6263 &   1.7079 &   5.2441 &  31.3891  &  6.1077 &  11.0225 \\
	Obj. func. (P3) & 3.1266 &   0.5505 &   0.3564  & 10.8915 &   3.1266 &  10.8300 \\
	Obj. func. (P4) & 231.6782 & 231.8134 & 226.6423 & 140.6115 & 231.4700 & 240.2376 \\
	Obj. func. (P5) & 4.0000 &  21.0000  & 21.0000 &  21.0000 &   4.0000 &   6.0000 \\
	Obj. func. (P6) & 11.0000 & 103.0000 & 232.0000 & 232.0000 &  37.0000 &  11.0000 \\
			\bottomrule
		\end{tabular}
	\end{center}
	\caption{High tech company network: values of the objective functions of problems (P1)-(P6) calculated at the optimal solutions of the same problems.}
	\label{tab:htc}
\end{table}

\begin{table}[tbp]
	\begin{center}
		\begin{tabular}{l@{\qquad}r@{\quad}r@{\quad}r@{\quad}r@{\quad}r@{\qquad}r}
			\toprule
			& Opt. sol. & Opt. sol. & Opt. sol. & Opt. sol. & Opt. sol. & Opt. sol. \\
			&      (P1) &      (P2) &      (P3) &      (P4) &      (P5) &      (P6) \\ 
			\midrule
	Obj. func. (P1) &    9.7070 &  12.8607 & 276.5547 & 464.8477 &  80.8821 &  39.0658\\
	Obj. func. (P2) & 4.7889 &   2.8468 &  15.7666 &  42.4710 &  11.8447 &  22.9597 \\
	Obj. func. (P3) & 4.2933 &   1.3231  &  0.9119 &  12.0516 &   4.6204  & 16.6946 \\
	Obj. func. (P4) & 314.2816 & 315.5194 & 281.5060 & 203.4237 & 313.2941 & 343.6404 \\
	Obj. func. (P5) & 5.0000 &  24.0000 &  26.0000 &  26.0000 &   5.0000 &   8.0000 \\
	Obj. func. (P6) & 9.0000 &  54.0000 & 313.0000 & 313.0000 &  68.0000  &  9.0000 \\
			\bottomrule
		\end{tabular}
	\end{center}
	\caption{Bison network: values of the objective functions of problems (P1)-(P6) calculated at the optimal solutions of the same problems.}
	\label{tab:bison}
\end{table}


\section{Conclusions}
We investigated six optimization-based formulations of the inverse eigenvector centrality problem, addressing the inherent non-uniqueness of arc-weight realizations associated with a prescribed centrality profile. These formulations provide a unified framework that can be interpreted either as centralized design strategies or as endogenous mechanisms shaping weighted network structures. Our analytical results establish basic properties of the solutions, while numerical experiments illustrate how different objectives lead to distinct network realizations despite inducing the same centrality vector.

Several directions for future research naturally arise, such as extensions to undirected graphs, time-varying networks, and  alternative centrality measures. From a modeling perspective, incorporating additional structural constraints -- such as sparsity, capacity limits, or fairness considerations -- could further enhance applicability. Finally, these formulations naturally lend themselves to equilibrium-based extensions, in which edge weights emerge from strategic interactions rather than centralized optimization (see, e.g., \cite{PasRac21b,PasRac24}).


\subsection*{Acknowledgement}

The authors are members of the Gruppo Nazionale per l'Analisi Matema\-tica, la Probabilità e le loro Applicazioni (GNAMPA - National Group for Mathematical Analysis, Probability and their Applications) of the Istituto Nazionale di Alta Matematica (INdAM - National Institute of Higher Mathematics).
\\
The research was supported by the MUR research
programs PRIN2022 founded by the European Union - Next Generation EU (Project ``ACHILLES, eco-sustAinable effiCient tecHdrIven Last miLE logiStics'', CUP: E53D23005640006). 
It was also partially supported by the research project ``Programma ricerca di Ateneo UNICT 2024-26 NOVA - Network Optimization and Vulnerability Assessment'' of the University of Catania.



\begin{thebibliography}{99}

\bibitem{BerPle94}  
Bermann, A., Plemmons, R.J.:
Nonegative matrices in mathematical sciences.
Academic Press, New York (1979)

\bibitem{Bon72}  
Bonacich, P.: 
Factoring and weighting approaches to status scores and clique identification.
Journal of Mathematical Sociology, 2(1), 113–120 (1972)


\bibitem{Bon87} 
Bonacich, P.: 
Power and centrality: a family of measures. 
Am. J. Sociol. 92, 1170-1182 (1987)


\bibitem{BriPag98} 
Brin, S., Page, L.: 
The Anatomy of a Large-Scale Hypertextual Web Search Engine, Computer Networks and ISDN Systems, vol. 30, no. 1-7, pp. 107–117, 1998.

\bibitem{Cipolla25} 
Cipolla, S., Durastante, F., Meini, B.:
Enforcing Katz and PageRank Centrality Measures in Complex Networks.
SIAM Journal on Mathematics of Data Science 7, 1514-1539 (2025)

\bibitem{Fre79}  
Freeman, L.C.: 
Centrality in Social Networks: Conceptual Clarification.
Social Networks 1 (3), 215-239 (1979)

\bibitem{Horn} 
Horn, R.A., Johnson, C.R.: 
Matrix Analysis. 
Cambridge University Press, Cambridge (2013)

\bibitem{Kra87} 
Krackhardt, D.:
Cognitive social structures. 
Social Networks 9, 104-134 (1987)

\bibitem{Lan95} 
Landau, E.:  
Zur Theorie der Turniere. 
Zeitschrift f\"{u}r Mathematik und Physik 46, 447-458 (1895)

\bibitem{Lot79} 
Lott, D.F.:
Dominance relations and breeding rate in mature male American bison. 
Zeitschrift f\"{u}r Tierpsychologie 49(4), 418-432 (1979)

\bibitem{Man99}  
Mangasarian, O.: 
Arbitrary-norm separating plane.
Operations Research Letters 24, 15-23 (1999) 

\bibitem{Min88}  
Minc, H.: 
Nonnegative Matrices.
Wiley-Interscience Series in Discrete Mathematics and Optimization.
John Wiley \& Sons, New York, 1988

\bibitem{New10} 
Newman, M.E.J.:
Networks: An Introduction. 
Oxford University Press, Oxford, UK, 2010.

\bibitem{Lat_etal12} 
Nicosia, V., Criado, R., Romance, M., Russo, G., Latora, V.:
Controlling centrality in complex networks. 
Scientific Reports, vol. 2, Article 218, 2012.

\bibitem{PasRac21b} 
Passacantando M., Raciti, F.: 
A note on generalized Nash games played on networks. 
In: T.M. Rassias (ed), Nonlinear Analysis, Differential Equations, and Applications, Springer Optimization and Its Applications, vol. 173, 365-380 (2021).

\bibitem{PasRac24} 
Passacantando, M., Raciti, F.: 
A continuity result for the Nash equilibrium of a class of network games. 
J. Nonlinear Var. Anal. 8, 167-179 (2024).

\bibitem{PasRac26} 
Passacantando, M., Raciti, F.:
Exploring network centrality through the lens of Game Theory. 
In: P. Pardalos and T. Rassias (eds), Convex and Variational Analysis with Applications: In Honor of Terry Rockafellar's 90th Birthday, Springer, to appear.

\bibitem{Pe20} 
Peixoto, T.P.:
The Netzschleuder network catalogue and repository. 
\url{https://networks.skewed.de}, (2020). 

\bibitem{Sa72} 
Sade, D.S.: 
Sociometrics of macaca mulatta I. linkages and cliques in grooming matrices. 
Folia Primatologica 18 (3-4), 196-223 (1972)

\end{thebibliography}
\end{document}